\documentclass[11pt]{article}

\usepackage{epsfig,amsmath,latexsym,amssymb}
\usepackage{graphicx}
\usepackage{lscape}
\usepackage{picture, eso-pic, tikz} 
\usepackage{dsfont}
\usepackage{listings}
\usepackage{changes}
\usepackage{bbding}
\usepackage{url}
\usepackage{hyperref}
\usepackage{tablefootnote}
\usepackage{booktabs}

\usepackage{tikz}
\usetikzlibrary{arrows,shapes,er,positioning,arrows.meta,calc}
\tikzset{
     arrow/.style = { thick,  ->, >=Triangle},
}

\renewcommand{\baselinestretch}{1.1}
\def\E{{\mathbb E}}  %
\def\Beweis{\footnotesize}

\newcommand{\Remm}[1]{}
\newtheorem{theo}{Theorem}[section]

\newtheorem{prop}[theo]{Proposition}

\newtheorem{model ass}[theo]{Model Assumptions}

\newtheorem{rems}[theo]{Remarks}

\def\EndProof{{\begin{flushright}\vspace{-2mm}$\Box$\end{flushright}}}

\numberwithin{equation}{section}

\definecolor{MyGray}{rgb}{0.92,0.92,0.92}


\definecolor{British racing}{rgb}{0.0, 0.5, 0.0}

\def\bX{\boldsymbol{X}}

\def\bz{\boldsymbol{z}}

\begin{document}
\author{Ronald Richman\footnote{insureAI, ronaldrichman@gmail.com}
  \and
  Mario V.~W\"uthrich\footnote{Department of Mathematics, ETH Zurich,
mario.wuethrich@math.ethz.ch}}

\date{Revised Version of \today}
\title{From Chain-Ladder to Individual Claims Reserving}
\maketitle

\begin{abstract}
\noindent	
The chain-ladder (CL) method is the most widely used claims reserving technique in non-life insurance. This manuscript introduces a novel approach to computing the CL reserves based on a fundamental restructuring of the data utilization for the CL prediction procedure. Instead of rolling forward the cumulative claims with estimated CL factors, we estimate multi-period factors that project the latest observations directly to the ultimate claims. This alternative perspective on CL reserving creates a natural pathway for the application of machine learning techniques to individual claims reserving. As a proof of concept, we present a small-scale real data application employing neural networks for individual claims reserving.
\medskip

\noindent
{\bf Keywords.} Claims reserving, chain-ladder method, individual claims reserving, micro-level reserving, granular reserving, neural networks, Mack's method.

\end{abstract}

\setcounter{section}{-1}

\section{Addendum}
After uploading the first version of this manuscript to {\tt arXiv}, Florian Gerhardt (thank you very much!) pointed out that our main result had already been established on page 130 of Lorenz--Schmidt \cite{LorenzSchmidt}, where the approach is referred to as the grossing up method. Following the theorem in that reference, the authors remark that “..., the grossing up method is irrelevant in practice.” We believe that this assessment is too pessimistic in the era of machine learning. In our view, precisely this structural perspective may provide the key to bringing individual claims reserving into practical application. For this reason, we have chosen to leave the paper essentially unchanged, in the hope of convincing the reader that this approach offers a promising direction for future research.

\section{Introduction}
About a decade ago, research of individual claims reserving utilizing machine learning (ML) techniques began to emerge. Since then, numerous methods and models have been proposed, including regression trees, gradient boosting machines, and neural networks. Nevertheless, the area of individual claims reserving remains predominantly a research domain and has not yet achieved a widespread adoption in industry practice. Schneider--Schwab \cite{Schwab} write: ``Typically, newer models which consider richer data on individual claims are either parametric or use machine learning techniques. However, none have become a gold standard, and advances are still needed.''
We attribute this fact to various challenges. First, it is difficult to find publicly available individual claims data. This clearly hinders research in this area of actuarial science. Second, individual claims data is censored, low-frequency and of a complex time-series structure. It is generally difficult to build good predictive models for such problems. 
Third, the claims reserving problem is a multi-period forecasting problem. However, often, the underlying algorithms are only trained for performing one-period ahead forecasts. Naturally, a tweak is required to work around this problem going from one- to multi-period forecasts.
Fourth, the implementation and structure of the proposed individual claims reserving methods is rather complex and often specific to a certain claims reserving situation, for instance, every insurance company collects historic data of a slightly different nature (and format). This makes it difficult to benchmark the different methods. Moreover, the proposed approaches often need extended hyper-parameter tuning, e.g., to avoid biases, this leaves the question open whether the proposed method easily generalizes to other claims reserving situations (in a broader sense).

This paper introduces a fundamentally novel approach which we envision as a transformative step toward the widespread adoption of individual claims reserving across the insurance industry. This transformative step is not about a specific ML architecture, but our core idea is to reorganize historical individual claims data for direct multi-period forecasting. The main step is to reformulate the foundational chain-ladder (CL) reserving algorithm so that one can perform multi-period model fitting and forecasting. Once this step is fully understood, adapting this idea to ML methods is straightforward.
We will explain this in detail, after outlining the present state of the field of individual claims reserving using ML methods.

\medskip

We observe four main techniques to cope with the multi-period forecasting problem in individual claims reserving:

\medskip

(1) The multi-period forecasting is performed by a recursive one-period forecast procedure using past observations as inputs. Rolling this recursive procedure into the future, missing observed inputs are replaced by their forecasts. This is the first and most popular method used for multi-period forecasting; for literature in individual claims reserving see, e.g.,  De Felice--Moriconi \cite{DeFeliceMoriconi} and Chaoubi et al.~\cite{Chaoubi}. We briefly explain why this procedure may be problematic. Consider observations $(Y_1,\ldots, Y_t)$ to forecast a next response $Y_{t+1}$ at time $t\ge 1$. The natural forecast is given by
\begin{equation*}
(Y_1,\ldots, Y_t) ~\mapsto~  \widehat{Y}_{t+1}:= \E\left[\left.Y_{t+1}\right| Y_1,\ldots, Y_t\right].
\end{equation*}
One period later, at time $t+1$, we have collected 
observations $(Y_1,\ldots, Y_{t+1})$ and we build the next forecast of response $Y_{t+1}$
by
\begin{equation}\label{forecast 1}
(Y_1,\ldots, Y_{t+1}) ~\mapsto~  \widehat{Y}_{t+2}= \E\left[\left.Y_{t+2}\right| Y_1,\ldots, Y_{t+1}\right].
\end{equation}
This gives a natural recursive one-period ahead forecast algorithm. The main difficulty in applying it in practice is that the observation $Y_{t+1}$ is not available at time $t$, and we cannot roll this algorithm into the future. The simple solution to this problem is to  impute the prediction $\widehat{Y}_{t+1}$ for the missing observation $Y_{t+1}$, that is, we set
\begin{equation}\label{forecast 1 forecast}
(Y_1,\ldots, Y_t, \widehat{Y}_{t+1}) ~\mapsto~  \widehat{Y}^\dagger_{t+2}:= \E\left[Y_{t+2}\left| Y_1,\ldots, Y_t, \widehat{Y}_{t+1}\right]\right..
\end{equation}
However, in general, this proposal of multi-period forecasting is inappropriate. We give an example. Assume that all responses are binary, $Y_t \in \{0,1\}$, $t\ge 1$. Thus, the conditional expectation in  \eqref{forecast 1} is based on binary observations
$(Y_1,\ldots, Y_t, Y_{t+1})  \in \{0,1\}^{t+1}$, and so is the ML model that is trained to approximate the forecast \eqref{forecast 1}. However, the forecast
$\widehat{Y}_{t+1} \in [0,1]$ imputed in \eqref{forecast 1 forecast} can take {\it any value} in the unit interval, e.g., 
$\widehat{Y}_{t+1} =0.46$, and the forecast model \eqref{forecast 1} does not know how to deal with this input value, because it has never seen a value different from zero or one before (because it only learned to deal with binary inputs).

\medskip

(2) A workaround of the problem discussed in item (1) is to learn a full simulation model from which one can simulate $Y_{t+1}$, given $Y_1,\ldots, Y_t$. This then allows one to perform a Monte Carlo simulation extrapolation. This is the solution applied, e.g., in 
W\"uthrich \cite{WTree} and Delong et al.~\cite{DelongLindholmW}. The main disadvantage of this approach clearly is that we need an accurate simulation model. If the responses $Y_t$ contain, e.g., claims payments, claims incurred and other stochastic processes, this is clearly beyond our modeling capabilities.

\medskip

(3) The works of Kuo \cite{Kuo1, Kuo2} present sequence-to-sequence forecasting methods, and the approach presented by Gabrielli \cite{GabrielliEAJ} uses a rather similar technique. These approaches mask missing observations, and the predictive model learns to perform forecasting under incomplete information, e.g., it tries to directly predict $Y_{t+2}$, given $(Y_1,\ldots, Y_t)$, by learning from all {\it available} information. This is a very suitable proposal. In practical applications, the main difficulty of this approach lies in controlling and mitigating potential biases during model training. Our proposal possesses this problem too, but we will see that expert intervention is rather easy in our approach.

\medskip

(4) Finally, an other option is to directly predict the ultimate claims from the available information. There are two approaches in the literature that consider this option. The first one is related to survival analysis that properly accounts for censored information. This has been considered, e.g.,  in Lopez et al.~\cite{Lopez, Lopez2}, Bladt--Pittarello \cite{Bladt}, Hiabu et al.~\cite{Hiabu} or Turcotte--Shi \cite{Turcotte}. The second method of a direct ultimate claim prediction uses reinforcement learning to optimally update the forecast based on the incoming information; see Avanzi et al.~\cite{AvanziRL}

\medskip

Our proposal aligns with option (4) of directly forecasting the ultimate claims, but the construction of our ultimate claim predictor is rather different from the two proposals above. Our starting point is the classic CL method. 
Usually, the CL method is rather accurate on aggregate claims -- verified by its successful use over many decades -- and the CL method is very simple in its use, not very prone to biases, easy to handle and easy to manipulate by expert knowledge. Our main contribution is to restructure the estimation of the CL factors. This reshapes the estimation procedure such that it can naturally be extended to ML methods on individual claims. Thus, our contribution is this novel representation of the CL estimation and not the ML method itself. In fact, to keep things simple we select an elementary neural network architecture in our example. We believe that our proposal widely opens the door for a natural pathway of ML applications to individual claims reserving. A few immediate advantageous over the existing methods are the following: (i) Our proposal can deal with any sort of stochastic dynamic covariates such as claims incurred or multiple payment processes of several injured people. (ii) It is a natural next step beyond the CL algorithm. This means that the basis is the CL method and our approach can easily be regularized using the CL predictor to control for biases. (iii) It can easily be extended to cash-flow forecasting in the sense of sequence-to-sequence methods as stated in item (3) above. In summary, the main advantage of starting from a CL structure is that the CL results give the guardrails for the ML predictions.

\medskip

A limitation that we should mention is that our proposal only considers
reported but not settled (RBNS) claims. This is common to most research in individual claims reserving, i.e., claims need to be reported so that their individual claims history is available for the use of the prediction of their further development. Incurred but not reported (IBNR) claims need to be forecast in addition, e.g., by a suitable frequency-severity reserving model.

\bigskip

{\bf Organization of the manuscript.}
In Section \ref{sec: Chain-ladder method}, we revisit Mack's CL model \cite{Mack}, and we present our main technical results that gives an alternative version of estimating the CL reserves. Section \ref{Individual ultimate prediction using machine learning} takes this alternative version to present a natural extension to individual claims reserving.
In Section \ref{Real data examples}, we study two small-scale real data examples that serve as proof of concept of our proposal. Finally, in Section
\ref{Conclusions and Outlook} we conclude and we give a list of next steps to lift this proposal to its full power for individual claims reserving. The mathematical proofs are given in the appendix.

\section{Chain-ladder method}
\label{sec: Chain-ladder method}
\subsection{Mack's distribution-free chain-ladder model}
We start by revisiting Mack's distribution-free CL model \cite{Mack}. Denote cumulative payments for claims in accident period $i \in \{1,\ldots, I\}$ and with development delay $j \in \{0,\ldots, J\}$ by $C_{i,j}$, and assume that these cumulative payments are strictly positive for all pairs $(i,j)$. Assume $I>J$, i.e., at least one accident period is fully observed (developed), and the goal is to predict the ultimate claims $C_{i,J}$ of accident periods $i>I-J$ at time $I$.

\begin{model ass}[CL model]
\label{distribution-free CL model assumptions}
The cumulative payment processes $(C_{i,j})_{0\le j\le J}$ of different accident periods $i \in \{1,\ldots, I\}$ are independent.
There exist positive parameters $(f_j)_{j=0}^{J-1}$ and  $(\sigma^2_j)_{j=0}^{J-1}$
such that for all $i \in \{1,\ldots, I\}$ and $j \in \{0, \ldots, J-1\}$ the following holds
\begin{eqnarray*}
\E \left[\left.C_{i,j+1}\right| C_{i,0}, \ldots, C_{i,j}
\right]&=& f_j\, C_{i,j},\\
{\rm Var} \left(\left.C_{i,j+1}\right| C_{i,0}, \ldots, C_{i,j}
\right)&=& \sigma^2_j\, C_{i,j}.
\end{eqnarray*}
\end{model ass}
An easy consequence of these assumptions is that one can compute the conditionally expected ultimate claims $C_{i,J}$ of accident periods $i>I-J$ at time $I$ as follows
\begin{equation}\label{true conditional mean}
\E \left[\left. C_{i,J} \right| C_{i,0}, \ldots, C_{i,I-i}\right]=
C_{i,I-i}\prod_{l=I-i}^{J-1} f_l,
\end{equation}
where $C_{i,0}, \ldots, C_{i,I-i}$ are the cumulative payments of accident period $i$ that are observed at time $I$, i.e., the cumulative payments $C_{i,j}$ with indices $i+j \le I$. This reflects the observed upper triangle\footnote{The shape of the observed cumulative payments $(C_{i,j})_{i+j\le I}$, is a triangle for $I=J+1$ and it is a trapezoid for $I>J+1$. To simplify language we generally speak about triangles, even if the shape is a trapezoid. This is common practice in claims reserving.} at time $I$.

To apply the CL method for reserving, there remains the estimation of the (unknown) {\it CL factors} $(f_j)_{j=0}^{J-1}$. At time $I$, these CL factors are commonly estimated by
\begin{equation}\label{CL factor estimates}
\widehat{f}_j = \frac{\sum_{i=1}^{I-j-1}C_{i,j+1}}
{\sum_{i=1}^{I-j-1}C_{i,j}}.
\end{equation}
This motivates the {\it CL predictors} for accident periods $i>I-J$ at time $I$ 
\begin{equation}\label{forward path}
\widehat{C}_{i,J}=
C_{i,I-i}\prod_{l=I-i}^{J-1} \widehat{f}_l.
\end{equation}
This is the classic CL predictor. It is unbiased for the (conditionally) expected ultimate claim; see Mack \cite{Mack} and W\"uthrich--Merz \cite[Lemma 3.3]{WM2008}.
We next give a different representation of the CL predictor  $\widehat{C}_{i,J}$, $i>I-J$, and this will pave the way to individual claims reserving utilizing ML techniques.

\subsection{Recursive chain-ladder factor estimation}
\label{Recursive chain-ladder factor estimation}
We can interpret the prediction in \eqref{forward path} as a forward-path estimation of the conditionally expected ultimate claim \eqref{true conditional mean}. This can be seen as follows
\begin{equation*}
\widehat{C}_{i,J}~=~C_{i,I-i}
\prod_{l=I-i}^{J-1} \widehat{f}_l~=~
\underbrace{\underbrace{C_{i,I-i}  \cdot \widehat{f}_{I-i}}_{I-i ~\to~ I-i+1}\,\cdot \,\widehat{f}_{I-i+1}}_{I-i ~\to~ I-i+2} \,\cdot\, \ldots \,\cdot \,\widehat{f}_{J-1}.
\end{equation*}
Thus, this is a step-wise (one-period ahead) roll forward extrapolation of $C_{i,I-i}$ using the CL factor estimates $(\widehat{f}_l)_{l=I-i}^{J-1}$ -- this is precisely the meaning of  {\it chain-ladder} extrapolation, see Figure \ref{fig:CL1}.

\begin{figure}[htb!]
\begin{center}
\includegraphics[width=\textwidth]{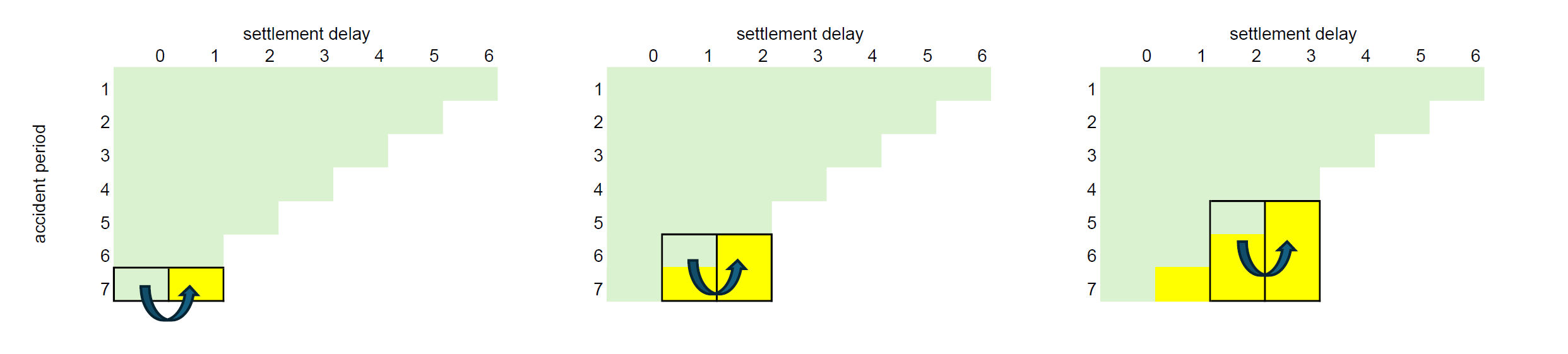}
\end{center}
\caption{Step-wise roll forward (chain-ladder) extrapolation to predict the ultimate claims $C_{i,J}$ using observations $C_{i,I-i}$, $i>I-J$, at time $I$ (for $I=7$ and $J=6$).}
\label{fig:CL1}
\end{figure}

We now give a different, recursive version that provides us with the same CL predictor by performing a {\it one-shot ultimate prediction}. Define the {\it projection-to-ultimate (PtU) factors} as follows
\begin{equation*}
  F_j := \prod_{l=j}^{J-1} f_l \qquad \text{ for $j\in \{0,\ldots, J-1\}$.}
\end{equation*}
Naturally, the following identity holds
\begin{equation*}
\E \left[\left. C_{i,J} \right| C_{i,0}, \ldots, C_{i,I-i}\right]~=~
C_{i,I-i}\prod_{l=I-i}^{J-1} f_l ~=~ 
C_{i,I-i} \, F_{I-i}.
\end{equation*}
The novel approach that we propose is to directly estimate these PtU factors $(F_j)_{j=0}^{J-1}$.
For this, we begin by introducing a new notation for the ultimate claims that are fully observed at time $I$, this will simplify the notation in the recursion below. Set
\begin{equation*}
\widehat{C}^*_{i,J}=C_{i,J}  \qquad \text{ for all $i \in \{1,\ldots, I-J\}$.}
\end{equation*}
We emphasize that these variables are all observed at time $I$; they correspond to the fully developed accident periods at time $I$.

\medskip

{\it (a) Initialization.}
We perform the estimation procedure of the PtU factors $(F_j)_{j=0}^{J-1}$ recursively starting from the upper-right corner of the claims reserving triangle.
We estimate the PtU factor of the last development period $j=J-1$ by
\begin{equation}\label{CL2 A}
\widehat{F}_{J-1}=\widehat{f}_{J-1}= \frac{\sum_{i=1}^{I-J}\widehat{C}^*_{i,J}}
{\sum_{i=1}^{I-J}C_{i,J-1}}
\qquad \text{ and } \qquad \widehat{C}^*_{I-J+1,J}=C_{I-J+1,J-1}\,\widehat{F}_{J-1}.
\end{equation}
{\it (b) Iteration.} The estimation is extrapolated recursively from $j+1 \to j$ by setting
\begin{equation}\label{CL2 B}
\widehat{F}_{j}= \frac{\sum_{i=1}^{I-j-1}\widehat{C}^*_{i,J}}
{\sum_{i=1}^{I-j-1}C_{i,j}}
\qquad \text{ and } \qquad \widehat{C}^*_{I-j,J}=C_{I-j,j}\,\widehat{F}_{j},
\end{equation}
for $j\in \{0,\ldots, J-2\}$.

\begin{figure}[htb!]
\begin{center}
\includegraphics[width=\textwidth]{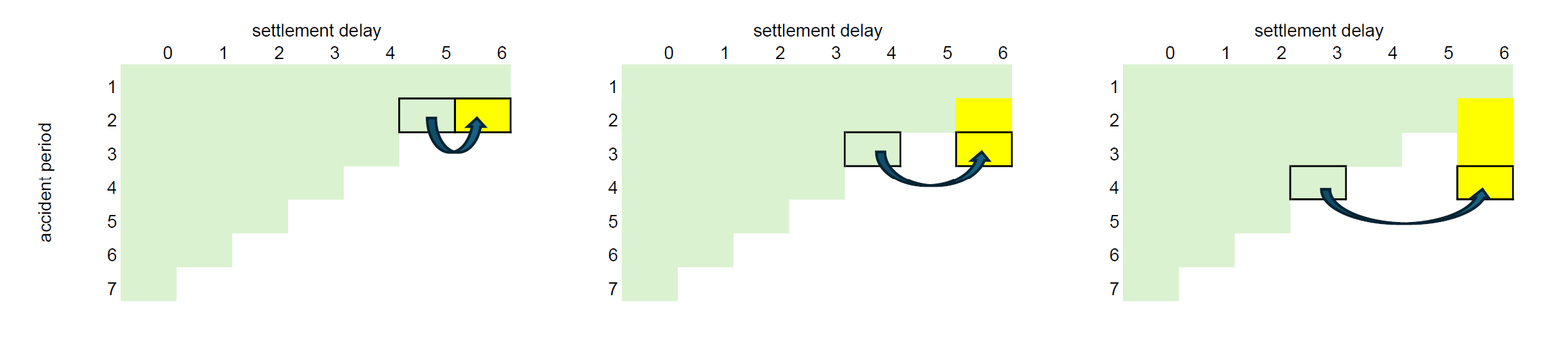}
\end{center}
\caption{Backward extrapolation to predict the ultimate claims $C_{i,J}$, $i>I-J$, using the estimated PtU factor estimates $(\widehat{F}_j)_{j=0}^{J-1}$: (left-middle-right) correspond to $j=J-1=5$, $j=4$ and $j=3$.}
\label{fig:CL2}
\end{figure}

The estimation of the forecast process \eqref{CL2 A}-\eqref{CL2 B} starts in the upper-right corner of the claims reserving triangle $j=J-1$, see Figure \ref{fig:CL2}. Then it moves recursively $j+1 \to j$ along the last observed diagonal to the most recent accident period $i=I$ ($j=0$) by {\it directly} predicting the ultimate claims using \eqref{CL2 B}.

\begin{prop}\label{CL prop}
$\widehat{C}^*_{i,J}=\widehat{C}_{i,J}$ for all accident periods $i\in \{I-J+1, \ldots, I\}$.
\end{prop}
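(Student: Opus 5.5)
We will prove, by backward induction on $j$ from $J-1$ down to $0$, the stronger statement that
\begin{equation*}
\widehat{F}_j=\prod_{l=j}^{J-1}\widehat{f}_l \qquad\text{and}\qquad \widehat{C}^*_{i,J}=C_{i,j+1}\prod_{l=j+1}^{J-1}\widehat{f}_l \quad\text{for all } i\in\{1,\ldots,I-j-1\}.
\end{equation*}
Here empty products equal one. Once the first identity holds for $j$, the definition $\widehat{C}^*_{I-j,J}=C_{I-j,j}\widehat{F}_j$ together with \eqref{forward path} immediately gives $\widehat{C}^*_{I-j,J}=\widehat{C}_{I-j,J}$. As $j$ runs through $\{0,\ldots,J-1\}$, the index $i=I-j$ runs through $\{I-J+1,\ldots,I\}$, which is exactly the claim of Proposition \ref{CL prop}.

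\emph{Base case $j=J-1$.} The second identity holds because $\widehat{C}^*_{i,J}=C_{i,J}$ for $i\le I-J$ by definition. The first identity is \eqref{CL2 A} compared with \eqref{CL factor estimates}, since the sums in \eqref{CL2 A} are exactly the numerator and denominator of $\widehat{f}_{J-1}$.

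\emph{Induction step.} Assume both identities hold at level $j+1$, i.e.\ $\widehat{F}_{j+1}=\prod_{l=j+1}^{J-1}\widehat f_l$ and $\widehat{C}^*_{i,J}=C_{i,j+2}\prod_{l=j+2}^{J-1}\widehat f_l$ for $i\le I-j-2$.

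First we show the second identity at level $j$, for $i\le I-j-1$. For the new index $i=I-j-1$ it is the definition $\widehat{C}^*_{I-j-1,J}=C_{I-j-1,j+1}\widehat{F}_{j+1}$ combined with the hypothesis on $\widehat{F}_{j+1}$. For the older indices $i\le I-j-2$ it is \emph{not} true individually, since $C_{i,j+2}\ne C_{i,j+1}\widehat f_{j+1}$ in general. This is the main obstacle, and we resolve it by working with sums rather than termwise. Summing over $i\le I-j-2$ and using the definition \eqref{CL factor estimates} of $\widehat f_{j+1}$, we get
\begin{equation*}
\sum_{i=1}^{I-j-2}\widehat{C}^*_{i,J}=\Big(\prod_{l=j+2}^{J-1}\widehat f_l\Big)\sum_{i=1}^{I-j-2}C_{i,j+2}
=\Big(\prod_{l=j+1}^{J-1}\widehat f_l\Big)\sum_{i=1}^{I-j-2}C_{i,j+1}.
\end{equation*}
So the correct inductive statement is the summed version:
\begin{equation*}
\sum_{i=1}^{I-j-1}\widehat{C}^*_{i,J}=\Big(\prod_{l=j+1}^{J-1}\widehat f_l\Big)\sum_{i=1}^{I-j-1}C_{i,j+1}.
\end{equation*}
This follows by adding the new term $i=I-j-1$ to the display above.

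Dividing the summed identity by $\sum_{i=1}^{I-j-1}C_{i,j}$ and using \eqref{CL2 B} yields
\begin{equation*}
\widehat F_j=\Big(\prod_{l=j+1}^{J-1}\widehat f_l\Big)\frac{\sum_{i=1}^{I-j-1}C_{i,j+1}}{\sum_{i=1}^{I-j-1}C_{i,j}}=\prod_{l=j}^{J-1}\widehat f_l.
\end{equation*}
This closes the induction.

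The only delicate point is the index bookkeeping. The ranges $i\le I-j-1$ used in $\widehat f_j$ and in $\widehat F_j$ coincide, and the range $i\le I-j-2$ used in $\widehat f_{j+1}$ is exactly the range carried forward from the previous step. The new diagonal element $i=I-j-1$ is what bridges the two ranges. Checking these ranges carefully should make the argument go through.
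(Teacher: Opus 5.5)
Your argument is correct and is essentially the paper's proof. Both split $\sum_{i=1}^{I-j-1}\widehat{C}^*_{i,J}$ into the rows $i\le I-j-2$ and the new diagonal entry $i=I-j-1$, identify the total as $\widehat{F}_{j+1}\sum_{i=1}^{I-j-1}C_{i,j+1}$, and divide by $\sum_{i=1}^{I-j-1}C_{i,j}$ to get $\widehat{F}_j=\widehat{f}_j\,\widehat{F}_{j+1}$.

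Do remove the termwise identity from your announced induction hypothesis. As you observed yourself, it is false for $j+1<J-1$. The summed version you actually use needs no separate induction: it is just the definition of $\widehat{F}_{j+1}$ in \eqref{CL2 A}--\eqref{CL2 B} combined with $\widehat{F}_{j+1}=\prod_{l=j+1}^{J-1}\widehat{f}_l$, so the detour through $C_{i,j+2}$ and $\widehat{f}_{j+1}$ is also unnecessary.
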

The proof of this proposition is given in the appendix, and this is the result proved in Lorenz--Schmidt \cite{LorenzSchmidt}.

\medskip

The exciting fact stated in Proposition \ref{CL prop} is that both algorithms -- the step-wise roll forward CL extrapolation of Figure \ref{fig:CL1} and the backward extrapolation of Figure \ref{fig:CL2} -- lead to the identical CL reserves. We give some remarks.
\begin{itemize}
\item Both algorithms shown in Figures \ref{fig:CL1} and \ref{fig:CL2} use the identical information and they provide the identical forecasts. We can view this as an optimal use of triangular information because every observed cumulative payment $C_{i,j}$, $i+j \le I$, is used whenever it is appropriate. This is different, e.g., from the bootstrap of individual claims histories of Rosenlund \cite{Rosenlund} who bases his model learning procedure only on claims with fully observed trajectories.
\item The claims reserving problem is a typical situation of censored observations, with the additional difficulty that different (censored) observations have different maturity. This gives the close connection to survival analysis. We mention the Kaplan--Meier weighting method considered in Lopez et al.~\cite{Lopez, Lopez2} to correct for censored claims information, although the method in these references is not directly applicable to claims reserving because cumulative claims payments are assumed to be proportional to time to settlement. Other survival analysis papers of a similar nature are Hiabu et al.~\cite{Hiabu} -- this paper considers survival analysis for IBNR prediction using a CL forward path -- and Bladt--Pittarello \cite{Bladt} use the Aalen--Johansen estimator in a multi-state process on a finite state space.
\end{itemize}  

\subsection{Chain-ladder RBNS reserving}
\label{Chain-ladder RBNS reserving}
The ultimate claim predictions $\widehat{C}_{i,J}$ of the CL algorithm cover both the RBNS and the IBNR claims.
Our first modification is to only predict ultimates of RBNS claims at time $I$; by RBNS claim we denote any reported claim at time $I$ which can either be a closed or an open one (i.e., we allow for re-openings in this definition). 
We label individual claims by $\nu$, separately for each occurrence period $i \in \{1,\ldots, I\}$.
Let $C_{i,j|\nu}$ denote the cumulative payments of the $\nu$-th claim of occurrence period $i$ after development period $j$ (since occurrence). We can then write for the aggregate cumulative payments 
\begin{equation*}
  C_{i,j} = \sum_{\nu:\, T_{i|\nu} \le j} C_{i,j|\nu},
  \end{equation*}
  where the sum runs over all claims $\nu$ of accident period $i$ that are reported at time $i+j$, and
  $T_{i|\nu} \in \{0,\ldots, J\}$ denotes the reporting delay of the $\nu$-th claim of accident period $i$.

The reported claims at time $I$ are characterized by $i+T_{i|\nu} \le I$ and the IBNR claims by
  $i+T_{i|\nu} > I$. We now focus on RBNS reserves that only involve reported claims. For this we define a version of the PtU factor estimates $(\widehat{F}^{(r)}_j)_{j=0}^{J-1}$ that {\it only} consider the reported claims in the corresponding periods. 
  
\medskip  
  
{\it (a) Initialization.}  We set $\widehat{C}^{(r)}_{i,J|\nu}=C_{i,J|\nu}$ for all claims $\nu$ with accident dates $i\le I-J$; these claims are fully developed. 
We then initialize for $j=J-1$
\begin{equation}\label{CL2 A reported}
\widehat{F}^{(r)}_{J-1}= \frac{\sum_{i=1}^{I-J} \sum_{\nu:\, T_{i|\nu} \le J-1}\widehat{C}^{(r)}_{i,J|\nu}}
{\sum_{i=1}^{I-J}\sum_{\nu:\, T_{i|\nu} \le J-1}C_{i,J-1|\nu}}
\qquad \text{ and } \qquad \widehat{C}^{(r)}_{I-J+1,J|\nu}=C_{I-J+1,J-1|\nu}\,\widehat{F}^{(r)}_{J-1},
\end{equation}
for all reported claims $\nu$ of accident period $I-J+1$, i.e., with reporting delay $T_{I-J+1|\nu} \le J-1$. The PtU factor estimate $\widehat{F}^{(r)}_{J-1}$
only considers claims that have a reporting delay less or equal to $J-1$, thus, in the numerator and the nominator of the PtU factor estimate $\widehat{F}^{(r)}_{J-1}$ we consider the {\it identical cohort} of reported claims.
Thus, this is a consistent consideration of reported claims extrapolation.

\medskip

{\it (b) Iteration.}
This is recursively extended from $j+1 \to j \in\{0,\ldots, J-2\}$ by setting
\begin{equation}\label{CL2 B reported}
\widehat{F}^{(r)}_{j}= \frac{\sum_{i=1}^{I-j-1}\sum_{\nu:\, T_{i|\nu} \le j}\widehat{C}^{(r)}_{i,J|\nu}}
{\sum_{i=1}^{I-j-1}\sum_{\nu:\, T_{i|\nu} \le j}C_{i,j|\nu}}
\qquad \text{ and } \qquad \widehat{C}^{(r)}_{I-j,J|\nu}=C_{I-j,j|\nu}\,\widehat{F}^{(r)}_{j},
\end{equation}
for all reported claims $\nu$ of accident period $I-j$, i.e., with reporting delay $T_{I-j|\nu} \le j$.
Again, in the numerator and the nominator of the PtU factor estimate $\widehat{F}^{(r)}_{J-1}$ we consider the {\it identical cohort} of reported claims $\nu$, i.e., with $T_{i|\nu} \le j$.
We can also interpret this of having a moving
target
\begin{equation*}
\sum_{\nu:\, T_{i|\nu} \le j}\widehat{C}^{(r)}_{i,J|\nu},
\end{equation*}
in the numerators of \eqref{CL2 B reported} as this claims cohort is decreasing for decreasing $j$ due increasingly many IBNR claims $T_{i|\nu} > j$ for decreasing $j$.

\medskip

We would like to highlight the similarity of our proposal to Schnieper's model \cite{Schnieper}, although the estimation procedures differ. Schnieper's model separates RBNS from IBNR claims similarly to our PtU estimation, but Schnieper's estimation of the development factors contains reserves for IBNR claims (once they are reported), whereas our estimation procedure \eqref{CL2 B reported} fixes the claims cohort and projects it directly to its ultimates $\widehat{C}^{(r)}_{i,J|\nu}$ without including reserves for additional late reported claims.

\section{Individual ultimate prediction using machine learning}
\label{Individual ultimate prediction using machine learning}
The recursive formulas \eqref{CL2 A reported}-\eqref{CL2 B reported} give the intuition for how the CL method can be extended to an individual claims ML method using {\it any} available input information. Denote the settlement process of claim $\nu$ of accident period $i$ by ${\cal C}_{i|\nu}:=(C_{i,l|\nu}, \bX_{i,l|\nu})_{l=0}^{J}$,
where $(\bX_{i,l|\nu})_{l=0}^{J}$ can be any claims feature process that may involve {\it static covariates} that are known at reporting (like claim type or business line), {\it deterministic dynamic covariates} (like settlement delay) and {\it stochastic dynamic covariates} (like claims incurred process or claim status process for open/closed claims).
We emphasize that this latter information typically is not fully known  at time $I$ for RBNS claims with accident dates $i>I-J$ as it stochastically evolves into the future, and one may want to consider its stochastic modeling and prediction as well. The exciting fact about our method presented below is that extrapolation of the claims feature process is {\bf not} necessary!

\subsection{Recursive individual claims reserving}

{\bf Assumption.} We assume that the individual claims processes ${\cal C}_{i|\nu}$ are independent, and that they are conditionally i.i.d., given the static covariates known at reporting. 

\medskip

To comply with the previous assumption it may require to adjust some of the dynamic quantities with inflation indices (e.g., payments, incurred and case reserves). For late reported RBNS claims, we mask the covariate process part of $(\bX_{i,l|\nu})_{l=0}^{J}$ that refers to time periods $j < T_{i|\nu}$ before reporting.

\bigskip

We now present a recursive forecast procedure for individual claims that in its methodology is equivalent to \eqref{CL2 A reported}-\eqref{CL2 B reported}.

\medskip

{\it (a) Initialization.} We start with accident period $I-J+1$ or, equivalently, development period $j=J-1$. For all claims $\nu$ that have occurred in accident periods $i \le I-J$, the stochastic processes ${\cal C}_{i|\nu}=(C_{i,l|\nu}, \bX_{i,l|\nu})_{l=0}^{J}$ are fully observed, and we aim at predicting the ultimates $C_{I-(J-1),J|\nu}$ of all claims in the next accident period $I-J+1$. These claims are reported at time $I$, thus, they have reporting delay $T_{I-(J-1)|\nu} \le J-1$, and we have observed their claims histories $(C_{I-(J-1),l|\nu}, \bX_{I-(J-1),l|\nu})_{l=0}^{J-1}$.

To bring the learning data into the same shape, we are only allowed to consider those claims that have been reported before settlement delay $J$. This gives us the learning data to perform the initial learning step
\begin{equation*}
{\cal L}_{J-1} = \left\{ \left(C_{i,J|\nu}, (C_{i,l|\nu}, \bX_{i,l|\nu})_{l=0}^{J-1}\right);\, i\le I-J \text{ and } T_{i|\nu} \le J-1 \right\}.
\end{equation*}
Based on this learning data ${\cal L}_{J-1}$, we build the regression function
\begin{equation}\label{regression J-1}
(C_{i,l|\nu}, \bX_{i,l|\nu})_{l=0}^{J-1}
~\mapsto~ \mu_{J-1}\left((C_{i,l|\nu}, \bX_{i,l|\nu})_{l=0}^{J-1}\right)=
\E \left[C_{i,J|\nu} \left|(C_{i,l|\nu}, \bX_{i,l|\nu})_{l=0}^{J-1}
\right]\right. .
\end{equation}
This regression function $\mu_{J-1}(\cdot)$ describes the forecast step $J-1 \to J$, and it allows us to predict the ultimates for the claims $\nu$ of accident period $i=I-J+1$, with  observed claims histories
$(C_{I-J+1,l|\nu}, \bX_{I-J+1,l|\nu})_{l=0}^{J-1}$. In particular, we can compute the forecast (assuming stationary along the occurrence period axis $i$)
\begin{eqnarray*}
\widehat{C}_{I-(J-1),J|\nu}
&:=&\mu_{J-1}\left((C_{I-(J-1),l|\nu}, \bX_{I-(J-1),l|\nu})_{l=0}^{J-1}\right)
\\&=&
\E \left[C_{I-(J-1),J|\nu} \left|(C_{I-(J-1),l|\nu}, \bX_{I-(J-1),l|\nu})_{l=0}^{J-1}
\right]\right. .
\end{eqnarray*}
We append these predictions to the observed claims histories for accident period $I-J+1$
\begin{equation}\label{append 1}
  \widehat{\cal C}_{I-(J-1)|\nu}:=
\left((C_{I-(J-1),l|\nu}, \bX_{I-(J-1),l|\nu})_{l=0}^{J-1}, \,\widehat{C}_{I-(J-1),J|\nu}\right).
\end{equation}
This initializes the recursion, and it reflects the figure on the left-hand side
of Figure \ref{fig:CL2}.

\medskip


{\it (b) Iteration.}  We recursively extend this from $j+1 \to j \in \{0,\ldots, J-2\}$ or, equivalently, to accident period $I-j$. The stochastic processes ${\cal C}_{i|\nu}$ are fully observed for accident periods $i \le I-J$, and for accident periods $I-(J-1) \le i \le I-(j+1)$, we have partially observed claims histories $\widehat{\cal C}_{i|\nu}$ including an appended ultimate claim forecast $\widehat{C}_{i,J|\nu}$. The goal is to extend this to the claims $\nu$ of accident period $I-j$, which have a reporting delay $T_{I-j|\nu} \le j$ and observed claims histories $(C_{I-j,l|\nu}, \bX_{I-j,l|\nu})_{l=0}^{j}$ at time $I$.

We select the learning dataset such that it aligns with the maximal reporting lag of the RBNS claims of accident period $I-j$ at time $I$ to build consistent claims cohorts
\begin{eqnarray}\nonumber
{\cal L}_{j} &=& \left\{ \left(C_{i,J|\nu}, (C_{i,l|\nu}, \bX_{i,l|\nu})_{l=0}^{j}\right);\, i\le I-J \text{ and } T_{i|\nu} \le j \right\}
\\&&\cup \,
\left\{ \left(\widehat{C}_{i,J|\nu}, (C_{i,l|\nu}, \bX_{i,l|\nu})_{l=0}^{j}\right);\, I-(J-1)\le i\le I-(j+1) \text{ and } T_{i|\nu} \le j \right\}.\qquad
\label{appended history}
\end{eqnarray}
Based on this learning data ${\cal L}_{J-1}$, we build the regression function, which for the fully developed claims of accident periods $i \le I-J$ considers
\begin{equation}\label{estimate 1}
(C_{i,l|\nu}, \bX_{i,l|\nu})_{l=0}^{j}
~\mapsto~ \mu_{j}\left((C_{i,l|\nu}, \bX_{i,l|\nu})_{l=0}^{j}\right)=
\E \left[C_{i,J|\nu} \left|(C_{i,l|\nu}, \bX_{i,l|\nu})_{l=0}^{j}
\right]\right., 
\end{equation}
this predicts from development period $j \to J$. For the claims of accident periods
$I-(J-1)\le i\le I-(j+1)$ we use  the modified version
\begin{eqnarray}\nonumber
(C_{i,l|\nu}, \bX_{i,l|\nu})_{l=0}^{j}
~\mapsto~ \mu_{j}\left((C_{i,l|\nu}, \bX_{i,l|\nu})_{l=0}^{j}\right)
&=&\nonumber
\E \left[C_{i,J|\nu} \left|(C_{i,l|\nu}, \bX_{i,l|\nu})_{l=0}^{j}
\right]\right.
\\&=&
\E \left[\widehat{C}_{i,J|\nu} \left|(C_{i,l|\nu}, \bX_{i,l|\nu})_{l=0}^{j}
\right]\right.,
 \label{estimate 2}
\end{eqnarray}
the last step uses the tower property of conditional expectation. The latter can be learned from the second line of the learning sample ${\cal L}_{j}$ given in \eqref{appended history}.
Using this learned regression function $\mu_{j}(\cdot)$, given in \eqref{estimate 1}-\eqref{estimate 2}, we forecast the ultimate claims of accident period $I-j$ by
\begin{eqnarray*}
\widehat{C}_{I-j,J|\nu}
:=\mu_{j}\left((C_{I-j,l|\nu}, \bX_{I-j,l|\nu})_{l=0}^{j}\right)
=\E \left[C_{I-j,J|\nu} \left|(C_{I-j,l|\nu}, \bX_{I-j,l|\nu})_{l=0}^{j}
\right]\right. .
\end{eqnarray*}
This ultimate claim forecast is appended to the available claims histories
\begin{equation}\label{append 2}
  \widehat{\cal C}_{I-j|\nu}:=\left((C_{I-j,l|\nu}, \bX_{I-j,l|\nu})_{l=0}^{j},\, \widehat{C}_{I-j,J|\nu}\right).
\end{equation}

\medskip

Recursive iteration completes the ultimate claim forecasting, precisely reflecting the philosophy of Section \ref{Recursive chain-ladder factor estimation}. This is illustrated by the middle and the right-hand side figures of Figure \ref{fig:CL2}.

\begin{rems}\normalfont~\label{remarks 1}
\begin{itemize}
\item Crucially, the above algorithm can deal with {\it any} dynamic stochastic covariates {\it without} their extrapolation.
\item The above algorithm is not about a certain ML architecture, but it is rather about how the data is organized to perform the one-shot ultimate prediction. The input $(\bX_{i,l|\nu})_{l=0}^{j}$ can be of any (reasonable) form, it can include stochastic dynamic covariates, it can involve textual data like medical reports, but it can also include information at a higher time frequency, i.e., the forecasting frequency periods $j\ge 0$ can also deal with continuously arriving information $(\bX_{i,l|\nu})_{l \in [0,j]}$. At this stage the choice of the ML architecture will become important as it needs to be able to deal with the formats of the inputs.

\item In practical applications, the only critical item of this algorithm is its recursive nature. In \eqref{append 1} and \eqref{append 2} we append the estimated ultimate claim of a given accident period $i=I-(j+1)$ to the observations, in order to be able to learn the one of the next accident period $i+1=I-j$. This recursive nature has the disadvantage that if one accident period has a biased estimate, this bias will propagate through the following accident period estimates. Therefore, it is very important to correct for biases whenever possible.
\item This recursive algorithm uses the observed data in the same efficient way as the CL method, this is motivated by the result of Proposition \ref{CL prop}. In particular, we simultaneously learn from fully developed but also from non-settled claims, such that always all relevant information is considered for prediction. The trade-off of this optimal use of information is the bias-control mentioned in the previous item.
\end{itemize}
\end{rems}

\section{Real data examples}
\label{Real data examples}
We present two small scale real data examples. The first one considers accident insurance and the second one liability insurance. These two examples serve as a proof of concept, and they are neither meant to solve the most complex claims reserving problem, nor do they use the latest most fancy ML techniques, see second item of Remarks \ref{remarks 1}. As mentioned, our main contribution is the optimal use of triangular data for one-shot ultimate claim prediction which leads to our key result of Proposition \ref{CL prop}. This widely opens the door for individual claims reserving, and in our final Section \ref{Conclusions and Outlook}, below, we highlight potential next steps.

To further strengthen our proposal, we select two examples for which also the lower triangle is known. This allows us to benchmark our forecasts against the true outcomes. Naturally, this requires that we restrict ourselves to older accident periods as the most recent accident period considered needs to be fully developed to be able to benchmark it against the ground truth.

\subsection{Description of data}
\subsubsection{Accident insurance data}
We use an accident insurance dataset on an annual scale with 5 fully observed accident years, i.e., we have a fully observed $5\times 5$ square. For model fitting and forecasting, we {\it only} use the {\it upper triangle},
as in Figure \ref{fig:CL2}, and we benchmark the forecasts against the true ultimates which are available here (having also observed the lower triangle).

\begin{table}[h]
\footnotesize
\centering
\begin{tabular}{lc}
\toprule
\textbf{Characteristic} &  \\
\midrule
Time scale & calendar years  \\
Number of accident years & 5  \\
Number of development years & 5  \\
Number of reported claims & 66,639  \\
\midrule
\multicolumn{2}{l}{\textbf{Data description}} \\
\midrule
\multicolumn{2}{l}{Annual individual cumulative payments $C_{i,j|\nu}$} \\
\multicolumn{2}{l}{Claim status $O_{i,j|\nu}\in \{0,1\}$ for closed/open} \\
\multicolumn{2}{l}{Binary static covariate for work or leisure accident} \\
\multicolumn{2}{l}{Calendar month of accident} \\
\multicolumn{2}{l}{Reporting delay in daily units} \\
\bottomrule
\end{tabular}
\caption{Characteristics of accident dataset.}
\label{tab:accidentdata}
\end{table}
Table \ref{tab:accidentdata} shows the available data of the accident insurance dataset. There are 66,639 reported claims with a fully observed development history over the $5\times 5$ square. Besides the individual cumulative payment process $(C_{i,j|\nu})_j$, there is information about the claim status process $(O_{i,j|\nu})_j$, with $O_{i,j|\nu}=1$ meaning that the $\nu$-th claim of accident year $i$ is open at the end of settlement delay $j$, and closed otherwise. Then, there is static information about: work or leisure related accident, the calendar month of the accident and the reporting delay in daily units. We collect all this static and dynamic covariate information in the claims feature process $(\bX_{i,j|\nu})_j$, see Section \ref{Individual ultimate prediction using machine learning}.

We give some remarks on the available covariates. The claim status process $(O_{i,j|\nu})_j$ is stochastic dynamic. Typically, for closed claims $O_{i,j|\nu}=0$ we do not expect further payments beyond settlement delay $j$ and the cumulative payment process should remain constant in these further periods. However, claims can be re-opened for further (unexpected) claims developments. Our method also reserves for these (potential) payments.
 Typically, work or leisure related accidents are different, e.g., a bank employee cannot have a skiing accident during work hours. The calendar month will distinguish summer from winter activities (biking vs.~skiing accidents). Finally, the reporting delay is important to infer the waiting period of the insurance contract, e.g., if the contract has a waiting period of 3 months, then daily allowance is only paid after this waiting period. Typically, the waiting period is positively correlated with the reporting delay.

\subsubsection{Liability insurance data}
The second dataset considers liability insurance (excluding motor liability). We again have a fully observed $5\times 5$ square and for model fitting we only use the upper triangle.

\begin{table}[h]
\footnotesize
\centering
\begin{tabular}{lc}
\toprule
\textbf{Characteristic} &  \\
\midrule
Time scale & calendar years  \\
Number of accident years & 5  \\
Number of development years & 5  \\
Number of reported claims & 21,991  \\
\midrule
\multicolumn{2}{l}{\textbf{Data description}} \\
\midrule
\multicolumn{2}{l}{Annual individual cumulative payments $C_{i,j|\nu}$} \\
\multicolumn{2}{l}{Claim status $O_{i,j|\nu}\in \{0,1\}$ for closed/open} \\
\multicolumn{2}{l}{Claim incurred $I_{i,j|\nu}\ge 0$} \\
\multicolumn{2}{l}{Binary static covariate for private vs.~commercial liability} \\
\multicolumn{2}{l}{Calendar month of accident} \\
\multicolumn{2}{l}{Reporting delay in daily units} \\
\bottomrule
\end{tabular}
\caption{Characteristics of liability dataset.}
\label{tab:liablitydata}
\end{table}

Table \ref{tab:liablitydata} shows the available data of the liability insurance dataset. The main difference to the previous example is that for this dataset there is also a claims incurred process $(I_{i,j|\nu})_j$ available. The claims incurred process is a claims adjuster's prediction of the individual ultimate claim that is constantly updated when new information becomes available, i.e., this is a stochastic process driven by the claims adjuster's assessments. Additionally, there is the case reserve process 
$(R_{i,j|\nu})_j$ available by computing $R_{i,j|\nu}=I_{i,j|\nu}-C_{i,j|\nu}$.

\subsection{Forecast model}
\subsubsection{Network architecture}
Next, we select the regression functions $(\mu_j)_{j=0}^{J-1}$; we refer to \eqref{regression J-1} and \eqref{estimate 1}-\eqref{estimate 2}.
In our two small scale examples we have $I=5$ and $J=4$, thus, we only need 4 regression functions $(\mu_j)_{j=0}^3$ in each of the two examples. We give a crude proposal that clearly allows for modification and improvement for bigger and refined datasets, see also Remarks \ref{remarks 1} and the discussion in Section \ref{Conclusions and Outlook}. Namely, we model all 4 regression functions $(\mu_j)_{j=0}^3$ by separate networks. Hence, we have to learn 4 different networks for each of the two examples. Another simplification that we make is the following Markov assumption
\begin{equation}\label{Markov}
\mu_{j}\left((C_{i,l|\nu}, \bX_{i,l|\nu})_{l=0}^{j}\right)=
\mu_{j}\left(C_{i,j|\nu}, \bX_{i,j|\nu}\right).
\end{equation}
That is, we do not consider the entire individual claims history $(C_{i,l|\nu}, \bX_{i,l|\nu})_{l=0}^{j}$, but we assume that it is sufficient to know the last state $(C_{i,j|\nu}, \bX_{i,j|\nu})$.

A natural choice for $\mu_{j}$ under the Markov assumption \eqref{Markov} is a feed-forward neural network (FNN) architecture; we will use the FNN notation and terminology of W\"uthrich--Merz \cite{WM2023}, and for a broader introduction to networks we also refer to that reference. If the Markov assumption \eqref{Markov} fails to hold, then a natural candidate for $\mu_j$ is a Transformer architecture with an integrated CLS token for input encoding; see Richman et al.~\cite{RichmanCred}.

\begin{table}[h]
\centering
{\footnotesize
\begin{center}
\begin{tabular}{|l||c|c|c|}
\hline
Module & Dimension & $\#$ Weights & Activation
\\\hline\hline
Input layer & 5 & --& -- \\
1st hidden layer & 20 & 120 & $\tanh$ \\
2nd hidden layer & 15 & 315 & $\tanh$ \\
3rd hidden layer & 10 & 160 & $\tanh$ \\
Output layer & 1 & 11 & $\exp$ \\
\hline
\end{tabular}
\end{center}}
\caption{Selected FNN architecture $\mu^{\rm FNN}_{j}$, for $1\le j\le J-1$, in the accident insurance example.}
\label{architecture table}
\end{table}

We select identical plain-vanilla FNN architectures 
\begin{equation*}
(C_{i,j|\nu}, \bX_{i,j|\nu})~\mapsto~
\mu^{\rm FNN}_{j}(C_{i,j|\nu}, \bX_{i,j|\nu})
=\exp \left(\bz_j^{(3:1)}(C_{i,j|\nu}, \bX_{i,j|\nu})\right),
\end{equation*}
for all development periods $j=0,\ldots, 3$. These FNN architectures  $\mu^{\rm FNN}_{j}$ consider FNNs $\bz_j^{(3:1)}$ with 3 hidden layers with number of neurons given by $(20,15,10)$ in the 3 hidden layers. The explicit specifications are shown in Table \ref{architecture table}, and this results in FNNs with 606 weights to be fitted in the case of the accident insurance dataset for every $j=0,\ldots, 3$. Since under the Markov assumption \eqref{Markov}, the input has always the same dimension, we use the identical FNN architecture for all regression functions  $(\mu^{\rm FNN}_j)_{j=0}^3$, only their weights will differ (after model fitting). For an explicit implementation of this architecture we refer to W\"uthrich--Merz \cite[Listing 7.1]{WM2023}, only the part involving the volume in that reference needs to be dropped because here we only have unit exposures.

We select the same FNN architecture also for the liability insurance dataset. In this second dataset we have an input dimension of 7 (additionally considering claims incurred and case reserves). This results in 160 weights in the first hidden layer and a FNN architecture of totally having 646 weights.

\subsubsection{Data pre-processing and model fitting}
\label{Data pre-processing and model fitting}
We are almost ready now to fit these four FNNs $(\mu^{\rm FNN}_j)_{j=0}^3$ for both examples. There remains the input pre-processing and the selection of the fitting procedure. Typically, inputs $(C_{i,j|\nu}, \bX_{i,j|\nu})$ should be standardized to receive an efficient stochastic gradient descent (SGD) fitting procedure. The individual cumulative payments are considered on the log-scale and they are standardized by
\begin{equation*}
\widetilde{C}_{i,j|\nu} ~\leftarrow~
\frac{\log\left( \max\{1,C_{i,j|\nu}\}\right) - \widehat{m}}{\widehat{s}}.
\end{equation*}
where $\widehat{m}$ and $\widehat{s}$ are the empirical mean and standard deviation of $\log ( \max\{1,C_{i,j|\nu}\})$ over all {\it available} cumulative payments $C_{i,j|\nu}$ simultaneously in all accident years $i$, for all claims $\nu$ and in all development periods $j$. 

In the case of the liability dataset, we apply the same transformation to claims incurred $I_{i,j|\nu}$, and we apply the standardization to the case reserves $R_{i,j|\nu}$ without going to the log-scale.

Furthermore, the claim status $O_{i,j|\nu} \in \{0,1\}$ does not require pre-processing. The binary static covariates are mapped to $\{0,1\}$, the calendar month $cm \in \{1,\ldots, 12\}$ is scaled to $[0,1]$ by transforming it to $(cm-1)/11$, and the reporting delay is censored at 365 days, mapped to the log-scale and being scaled to $[0,1]$ as for the calender month.
This provides us with the pre-processed input $(\widetilde{C}_{i,j|\nu}, \widetilde{\bX}_{i,j|\nu})$ which is used to forecast the corresponding ultimate claim
\begin{equation*}
(\widetilde{C}_{i,j|\nu}, \widetilde{\bX}_{i,j|\nu}) ~\mapsto ~
\widehat{C}^{\rm FNN}_{i,J|\nu}=
\mu_j^{\rm FNN}\left(\widetilde{C}_{i,j|\nu}, \widetilde{\bX}_{i,j|\nu}\right).
\end{equation*}
These FNN regression functions are learned recursively as described in 
Section \ref{Individual ultimate prediction using machine learning}, see also Figure \ref{fig:CL2}. For fitting, we apply the SGD algorithm with early stopping using the hyper-parameter specification as reported in Table \ref{hyperparameters}.

\begin{table}[h]
\footnotesize
\centering
\begin{tabular}{ll}
\toprule
\textbf{Component} & \textbf{Setting} \\
\midrule
Loss function & mean squared error (MSE)\\
Optimizer & Adam with learning rate $10^{-3}$ \\
Batch size and epochs & 4,096 and 1,000\\
Learning-validation split & $9:1$\\
Early stopping & reduce learning rate on plateau, factor 0.9, patience 5\\
Ensembling & 10 network fits with different seeds \\
\bottomrule
\end{tabular}
\caption{Key implementation hyper-parameters for FNN fitting.}
\label{hyperparameters}
\end{table}

\subsubsection{Bias control}
\label{Bias control}
In the forecast procedure, we install one special feature that is very crucial in bias control. Assuming stationarity along the accident year axis for given static features, there should not be any trend in the ultimate claim predictions, of course, this also means that inflation-adjusted quantities are considered. 

It is a well-known fact that SGD fitting with early stopping leads to biased models, see, e.g., W\"uthrich \cite{Wbalance}. We fix this by ensuring that the balance property holds, by shifting the prediction by the size of the observed in-sample bias. That is, 
we fit the FNN architecture on the learning data ${\cal L}_j$ using an early stopped SGD providing us with the fitted FNN architecture $\widehat{\mu}_j^{\rm FNN}(\cdot)$.
A balance corrected predictive version thereof is obtained by setting
\begin{equation}\label{bc predictor}
\widehat{\mu}_j^{{\rm bc-FNN}}(\cdot) = \frac{\sum_{i \le I-J} \sum_\nu C_{i,J|\nu}+
\sum_{I-(J-1) \le i \le I-(j+1)} \sum_\nu \widehat{C}_{i,J|\nu}}{\sum_{i \le I-(j+1)} \sum_\nu\widehat{\mu}_j^{\rm FNN}
\left((C_{i,l|\nu}, \bX_{i,l|\nu})_{l=0}^{j}\right)}\,
\widehat{\mu}_j^{\rm FNN}(\cdot).
\end{equation}
This gives a multiplicative correction to align the average in-sample prediction with the average observed (estimated) response; note that this is again of a recursive nature.

\medskip

We give two further features that may help to improve the predictive models.
\begin{itemize}
\item Because network fitting involves many elements of randomness, e.g., the initialization of the SGD algorithm, we always ensemble over 10 balance corrected predictors \eqref{bc predictor} being received from the same SGD algorithm but with  different seeds for initialization; see Richman--W\"uthrich \cite{RichmanW2}.
\item The fitting procedure can be regularized using expert knowledge, e.g., if we have a strong prediction from the CL or the Bornhuetter--Ferguson \cite{BF} method, we can use this prediction to regularize the learned FNN predictor 
$\widehat{\mu}_j^{\rm FNN}(\cdot)$. This can be done on an individual claims level, but also for more coarse claims cohorts.
\end{itemize}

\subsection{Results}
\subsubsection{Chain-ladder results}
We start by reporting the classic CL results of Mack \cite{Mack} on cumulative payments. This will set the benchmark for all subsequent methods.

\begin{table}[h]
\centering
{\footnotesize
\begin{center}
\begin{tabular}{|l|r|rr|rr|}
\hline
 & True OLL & CL reserves  & RMSEP & Error & in \% \\
\hline\hline
\underline{Accident} &&&&&\\
Mack's CL model \cite{Mack} & 24,212&	23,064&1,663&-1,148 &69\%\\
RBNS CL method &19,733 &18,959	&&	-774 &\\	
 \hline\hline
\underline{Liability} &&&&&\\
Mack's CL model \cite{Mack} & 15,730&	11,526&	1,977&	-4,204&	213\%\\
RBNS CL method &11,494&	8,601&&		-2,893&\\
 \hline
\end{tabular}
\end{center}}
\caption{Mack's CL results on cumulative payments and CL RBNS reserves for both datasets of accident and liability insurance.}
\label{CL results}
\end{table}

In Table \ref{CL results} we report Mack's CL reserves and Mack's rooted mean squared error of prediction (RMSEP), which is a measure of the prediction uncertainty of the CL reserves. In case of accident insurance the CL reserves are 23,064 and the RMSEP is 1,663. Since in these examples we also know the lower triangle, we can benchmark these results against the true outstanding loss liabilities (OLL), i.e., the difference between the true ultimate claims $C_{i,J|\nu}$ and the payments $C_{i,I-i|\nu}$ already done at time $I$. In the case of accident insurance, the true OLL are 24,212 and the CL reserves underestimate the true liabilities by -1,148, which amounts to 69\% of the RMSEP. Thus, this underestimation is below one RMSEP, and it can well be explained by irreducible risk and parameter estimation uncertainty. We conclude that the CL method seems to work well for the accident insurance dataset.

For the liability insurance example, the situation is different. From Table \ref{CL results} we note that the CL method underestimates the true OLL by 213\% of Mack's RMSEP, in statistical terms one would reject the CL model in this case because the true outcome is not within two standard deviations (RMSEPs) of the prediction.

Mack's CL model \cite{Mack} gives a prediction for RBNS and IBNR claims because it does not distinguish w.r.t.~the claims reporting pattern. 
Using the method presented in Section \ref{Chain-ladder RBNS reserving}, we compute the RBNS reserves, we coin it `RBNS CL method' in Table \ref{CL results}. We observe that the prediction errors generally decrease (over-proportionally) compared to the decline in reserves, which indicates that part of the underestimation problem is due to IBNR claims.

\medskip

We will use these `RBNS CL method' results of Table \ref{CL results} to benchmark all further individual claims reserving methods (which only consider RBNS claims at time $I$).

\subsubsection{Accident insurance: Individual claims reserving results}
We now perform individual claims reserving using the FNN architectures 
$(\mu^{\rm FNN}_j)_{j=0}^3$ being fitted as described above.
The results are reported in Table \ref{FNN results accident}.

\begin{table}[h]
\centering
{\footnotesize
\begin{center}
\begin{tabular}{|c|r|rr|rr|rr|}
\hline
$i$ & True OLL & RBNS CL  & FNN & CL Error & FNN Error & CL Ind RMSE & FNN Ind RMSE \\
\hline\hline
1 &0	&0	&0	&0	&0	&0	&0\\
2&353&	339&	173&	-14	&-180&	1.499&	2.575\\
3 &1,017&	1,305&	1,262&	288&	246	&2.956&	2.954\\
4 &3,102	&3,099	&3,290&	-2	&189&	4.263&	4.248\\
5 &15,263&	14,216	&14,712	&-1,046&	-551&	8.240&	8.177\\\hline
Total &19,735	&18,959	&19,437	&-774&	-296&&\\		
 \hline
\end{tabular}
\end{center}}
\caption{Accident insurance: Results of individual claims prediction using the fitted FNN architectures $(\mu^{\rm FNN}_j)_{j=0}^3$.}
\label{FNN results accident}
\end{table}

The column `FNN' shows the individual claims prediction results using the fitted and balance-corrected FNN architectures $(\widehat{\mu}^{\rm bc-FNN}_j)_{j=0}^3$ and the inputs as described in Table \ref{tab:accidentdata}. Generally, we observe that these FNN predictions are closer to the true OLL than the RBNS CL forecasts (columns `CL Error' vs.~'FNN Error'), indicating that indeed one can effectively learn by including the additional features given in Table \ref{tab:accidentdata}. The total prediction error is -296 which is only 1.5\% of the total true OLL; this should also be compared to the RMSEP of 1,663 in Mack's model \cite{Mack}, see Table \ref{CL results}, indicating that we have an excellent forecast from the individual claims model.

The last two columns of Table \ref{FNN results accident} report the rooted mean squared errors (RMSE) between the forecasts (RBNS CL/FNN) and the true OLL on an individual claims level. We generally, observe a decreasing individual prediction error, except for accident year $i=2$, where we only predict one single period ahead. Thus, this individual claims reserving method leads to more accurate reserves on an individual claims level.
The decreased values still have a similar magnitude, which indicates that the dominating uncertainty component is irreducible risk (pure randomness) and the systematic effects that we can learn from the covariates live on a smaller scale -- low signal-to-noise ratio based on the available information -- which is common in many actuarial applications. This will further be discussed w.r.t.~the results of Table \ref{FNN results accident 2}, below.

\medskip

The critical part of recursive claims reserving methods is that such methods are prone to biases. E.g., if we overestimate accident period $i=2$, then this overestimation will propagate (in an alleviated manner) through the subsequent accident periods $i\ge 3$ through the recursive structure of the estimation procedure. For this reason, potential biases need careful consideration, and the fitting procedure may require regularization, as described in Section \ref{Bias control}. Our next goal is to check whether the bias control (balance correction) presented in \eqref{bc predictor} is effective.

Because in our example we know the true lower triangle, we can directly backtest for the bias problem -- note that in any reasonable real-world application this is not possible because the lower triangle is unknown (otherwise we would not need to predict it), but we perform this analysis here to verify that our proposal indeed works. The propagated bias problem comes from the fact that later periods use estimates of earlier periods in their fitting procedure, see \eqref{append 2}, or more generally in the learning data
${\cal L}_j$ we have responses and inputs, see \eqref{appended history},
\begin{equation}\label{append general}
\left(\widehat{C}_{i,J|\nu}, (C_{i,l|\nu}, \bX_{i,l|\nu})_{l=0}^{j}\right),
\end{equation}
for $I-(J-1) \le i \le I-(j+1)$ and $T_{i|\nu}\le j$.
This reflects the recursive estimation nature that has been used to compute the FNN results of Table \ref{FNN results accident}. To verify the effectiveness of our proposal \eqref{bc predictor} of the bias control, we use in a second fitting attempt the true ultimate claims which are available in our unusual set-up of knowing the lower triangle. That is, we replace in the learning data ${\cal L}_j$ the items \eqref{append general} during the fitting procedure by
\begin{equation}\label{append general true}
\left(C_{i,J|\nu}, (C_{i,l|\nu}, \bX_{i,l|\nu})_{l=0}^{j} \right).
\end{equation}
Because this no longer leads to a recursive estimation procedure (the responses are fully observed in this training procedure), this forecast process cannot lead to a propagating bias problem. We use the identical FNN architecture as above and we fit these FNNs using the learning data composed of \eqref{append general true}. The results are reported in Table \ref{FNN results accident 2}, and because \eqref{append general true} is not available in any real-world application, we set the corresponding results in square brackets.

\begin{table}[h]
\centering
{\footnotesize
\begin{center}
\begin{tabular}{|c|r|rr|rr|rr|}
\hline
$i$ & True OLL & RBNS CL  & FNN & CL Error & FNN Error & CL Ind RMSE & FNN Ind RMSE \\
\hline\hline
1 &0	&0	&0	&0	&0	&0	&0\\
2&353&	339&[173]&	-14	&[-180]&	1.499&	[2.575]\\
3 &1,017&	1,305&	[1,421]&	288&	[404]	&2.956&	[3.006]\\
4 &3,102	&3,099	&[3,183]&	-2	&[81]&	4.263&	[4.223]\\
5 &15,263&	14,216	&[14,581]	&-1,046&	[-682]&	8.240&	[8.212]\\\hline
Total &19,735	&18,959	&[19,358]	&-774&	[-377]&&\\		
 \hline
\end{tabular}
\end{center}}
\caption{Accident insurance: Results of individual claims prediction using the true ultimate claims for model fitting.}
\label{FNN results accident 2}
\end{table}

Comparing the individual claims reserving FNN results of Tables 
\ref{FNN results accident} and \ref{FNN results accident 2}, we observe a huge similarity, meaning that the information with estimated responses \eqref{append general} and the one with true responses \eqref{append general true} lead to almost the identical results, both concerning the total estimation error `FNN Error' against the true OLL as well as the errors on individual claims levels (last column `FNN Ind RMSE' of Tables \ref{FNN results accident} and \ref{FNN results accident 2}). This also verifies that the main uncertainty driver is irreducible risk (low signal-to-noise ratio based on the available information), and not missing precision of the ultimate claims prediction in \eqref{append general}. Of course, in a next step one may try to integrate the entire individual claims histories, collect more individual claims information, e.g., medical reports in accident insurance, and this may further decrease prediction error on an individual claims level.

\medskip

\begin{figure}[htb!]
\begin{center}
\begin{minipage}[t]{0.49\textwidth}
\begin{center}
\includegraphics[width=\textwidth]{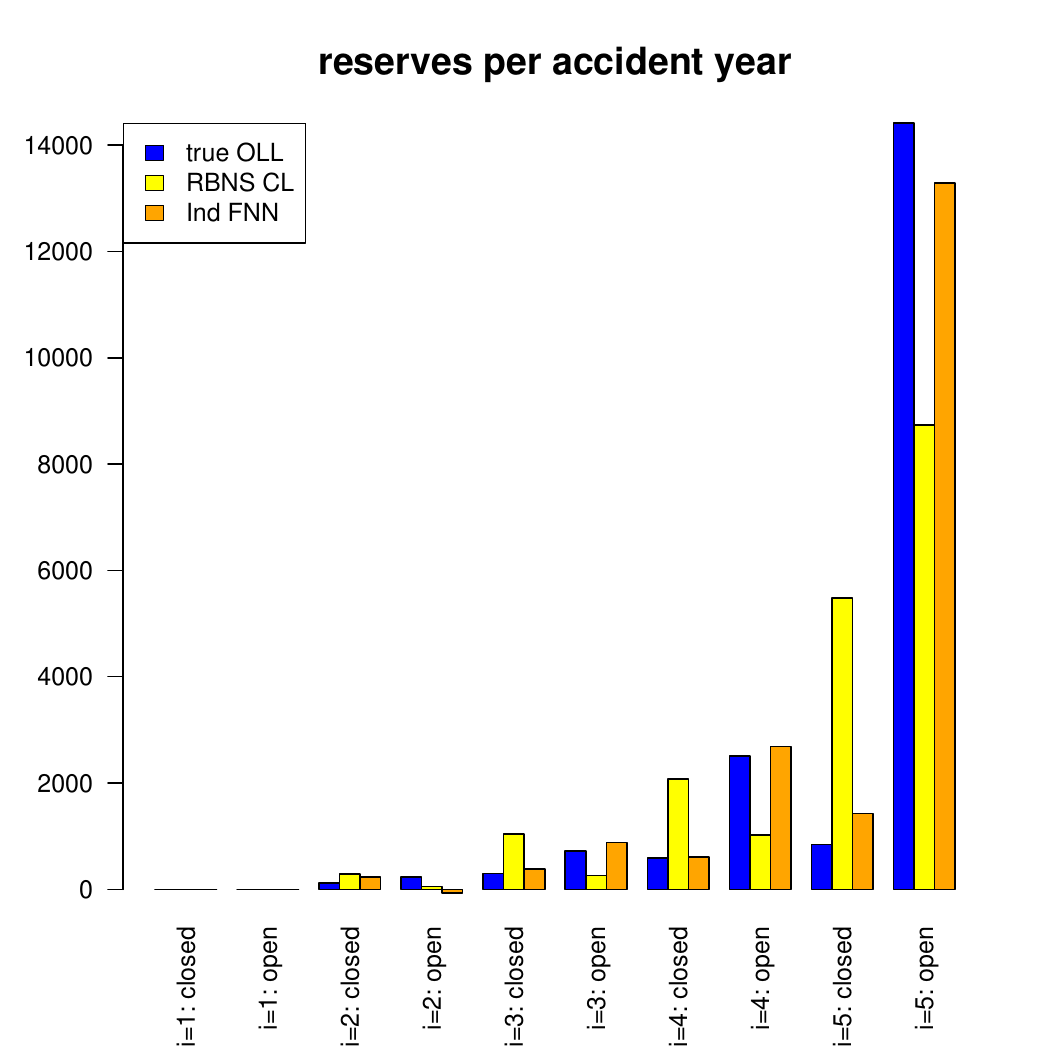}
\end{center}
\end{minipage}
\begin{minipage}[t]{0.49\textwidth}
\begin{center}
\includegraphics[width=\textwidth]{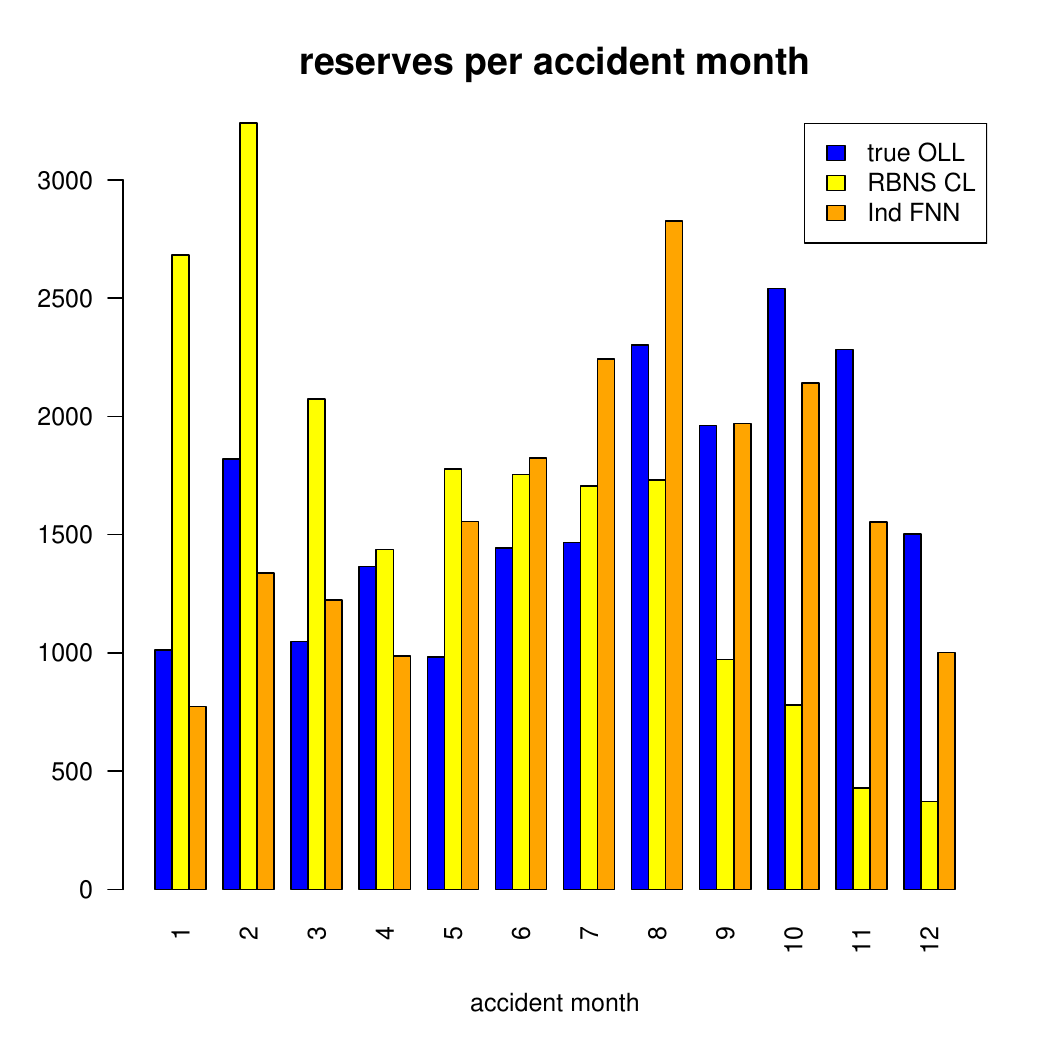}
\end{center}
\end{minipage}
\end{center}
\caption{(lhs) Reserves per accident year $i=1,\ldots, 5$ and separated by closed and open claims at time $I$, (rhs) per accident calendar month $cm \in \{1,\ldots, 12\}$. }
\label{figure reserves open claims}
\end{figure}

As a concluding analysis on the accident insurance dataset, we discuss the two plots shown in Figure \ref{figure reserves open claims}. The graph on the left-hand side shows the partition of the reserves w.r.t.~accident year $1\le i \le I=5$ and separated by closed and open claims at time $I$. The blue bars show the true OLLs, the yellow bars the CL predictions, and the orange bars the individual claims FNN forecasts. The individual FNN forecasts closely follow the true OLLs, thus, we seem to accurately capture the true OLLs, and we also correctly distinguish closed from open claims. It is also nice to see that indeed, there are quite some payments on closed claims (re-openings). The CL method, of course, cannot discriminate between closed and open claims, as the CL method does not consider a claim status label. 

The right-hand side of Figure \ref{figure reserves open claims} shows the reserves split by the accident month label. Also here we see a close alignment of individual claims FNN forecasts and true OLLs, generally having increasing reserves per accident month. This is explained by the fact that the payment data is considered on a calendar year grid, a January claim being more mature than a December claim, leading to generally lower reserves for early calendar months, but to receive the full picture, and should still complete this analysis by the different claim types in the different seasons; this information is not available here.

\subsubsection{General insurance: Individual claims reserving results}
Table \ref{FNN results liability} shows our second example of the liability insurance dataset. The interpretation of the results is rather similar to the accident insurance example. 

\begin{table}[h]
\centering
{\footnotesize
\begin{center}
\begin{tabular}{|c|r|rr|rr|rr|}
\hline
$i$ & True OLL & RBNS CL  & FNN & CL Error & FNN Error & CL Ind RMSE & FNN Ind RMSE \\
\hline\hline
1 &0	& 0	&0	&0	&0&	0&	0\\
2 & 361&	635&	761&	274&	400	&2.717	&4.483\\
3&3,233	&1,497	&1,743	&-1,736	&-1,491	&19.988&	18.574\\
4 &3,287&	2,488&	2,963 &	-799&	-324&	12.400&	11.704\\
5&4,613	&3,982&	4,038&	-631	&-575&	14.901	&13.907\\\hline
Total &11,494	&8,602&	9,505&	-2,892	&-1,894&&\\	
 \hline
\end{tabular}
\end{center}}
\caption{Liability insurance: Results of individual claims prediction using the fitted FNN architectures $(\mu^{\rm FNN}_j)_{j=0}^3$.}
\label{FNN results liability}
\end{table}

The reason for giving this second example is that additionally there is claims incurred $I_{i,j|\nu}$ available here. We use this claim incurred information as input in $\bX_{i,j|\nu}$, together with the resulting case reserves $R_{i,j|\nu}=I_{i,j|\nu}-C_{i,j|\nu}$. Interestingly, this has a very positive effect on the individual claims prediction error (last column of Table \ref{FNN results liability}), especially in the most recent accident year $i=5$. Intuitively it is clear that after the first development period $j=0$, the cumulative payments $C_{i,0|\nu}$ may not be very predictive for the ultimate claims prediction in this long-tailed business line. In this example the claims incurred estimates of the claims' adjusters are of good quality to improve individual claim prediction. Interestingly, for more mature accident years, there is less advantage in possessing claims incurred information. The reason may be two fold, either cumulative payments and claim status carry sufficient information in more developed years, or the quality of claims incurred is insufficient in more developed accident years (maybe due to a lack of continuous improvements/updates by claims' adjusters). 

Table \ref{FNN results liability 1} shows the results if we remove the claims incurred information from the inputs $\bX_{i,j|\nu}$. The biggest change in `FNN Ind RMSE' is indeed observed for the most recent accident year $i=5$, which supports that claims incurred is important information for the least developed claims.

\begin{table}[h]
\centering
{\footnotesize
\begin{center}
\begin{tabular}{|c|r|rr|rr|rr|}
\hline
$i$ & True OLL & RBNS CL  & FNN & CL Error & FNN Error & CL Ind Err & FNN Ind Error \\
\hline\hline
1 &0	& 0	&0	&0	&0&	0&	0\\
2 & 361&	635&503&	274&	142	&2.717	&5.072\\
3&3,233	&1,497	&1,664	&-1,736	&-1,569	&19.988&	20.079\\
4 &3,287&	2,488&	2,644 &	-799&	-643&	12.400&	12.026\\
5&4,613	&3,982&	5,028&	-631	&415&	14.901	&14.282\\\hline
Total &11,494	&8,602&	9,839&	-2,892	&-1,655&&\\	
 \hline
\end{tabular}
\end{center}}
\caption{Liability insurance: Results of individual claims prediction where we drop the claims incurred information from the inputs.}
\label{FNN results liability 1}
\end{table}

In view of Tables \ref{FNN results liability} and \ref{FNN results liability 1}, the bad performance of the RBNS CL method and partly of the individual claims reserving method can be traced back to accident year $i=3$. Analyzing this accident year in more detail, we find two individual claims that caused payments of 1,874 in development periods $j=3,4$, thus, two claims evolve to severe ones, which could not be anticipated by the systematic structure (covariates), but which needs to be attributed to the irreducible risk part. This also explains the large values in the two columns `Ind RSME' in Table \ref{FNN results liability}. Note that we considered all claims in our analysis -- in some literature large losses are discarded for a more stable fitting, see, e.g., Schneider--Schwab \cite[Section 3.1]{Schwab}; this removal simplifies modeling and analysis, but it is incorrect from a material point of view because these large claims are cost drivers that need to be borne by the insurer. Naturally, this large idiosyncratic part is not covered by the loss reserves (expected values), and it will trigger the solvency capital in a stand-alone view.


\section{Summary and next steps}
\label{Conclusions and Outlook}
We presented a novel way of computing the CL reserves by a recursive direct one-shot estimation and prediction procedure. This was achieved by restructuring the data and estimation procedure. This novel representation widely opens the door the ML applications on more granular individual claims data. In fact, this representation is suitable for any kind of input data, and due to the one-shot prediction approach it does not require to extrapolate the stochastic dynamic covariates, the latter being the main obstacle in most of the individual claims reserving methods. This barrier is now removed and we expect fast major developments in this field.

Another major advantage of our proposal over most other ML proposals is that we start from the familiar, well-proven chain-ladder method, the reserving actuaries are very familiar with. This allows one to easily use the chain-ladder predictions as guardrails to regularize the individual claims predictions staying close to the chain-ladder reserves. 

The main weakness of our proposal is the fact that it is a recursive procedure. Recursive methods are prone to propagating biases, which needs careful control in each step of the recursion. We presented the (simple) balance property approach for bias control, but being within a chain-ladder framework we can also envisage to control biases by classical chain-ladder predictions.

\begin{itemize}
\item Our main contribution is the recursive one-shot estimation and prediction procedure that enables the step from going from the classic chain-ladder method to individual claims reserving. Our (simple) example was rather meant as a proof of concept on a small data example using plain-vanilla neural networks. Naturally, there is huge potential (but also work to be done) to refine the machine learning method used to the specific individual claims reserving data selected, be it a neural network, a gradient boosting machine or other machine learning tools.
\item Naturally, the choice of the loss function plays a crucial role in model fitting. Our choice of the mean squared error  can clearly be improved so that the predictive model is improved over all ranges of the claim sizes, aiming at having accurate predictions both on large and small claims.
\item We have encountered some difficulties in modeling so-called zero claims, i.e., claims that can be closed without any payments. Our output activation function in the network architecture was the exponential one. This constrains predictions to strictly positive values. This point clearly needs more engineering to able to capture more effectively zero claims, especially those that have been closed already over several periods.
\item For our small scale example we imposed a Markov modeling assumption on the individual claims history, which implies that only the latest observation is relevant for prediction. Clearly this should be lifted to a model that includes the entire claims history. The straightforward next step is to employ a Transformer architecture including a CLS token for time-series data encoding. This allows for a data-driven decision about the validity of the Markov assumption.
\item For bias control we used probably the simplest method of balance correction. Clearly, there is more research necessary to prove its effectiveness and one should explore other methods.
\item At the moment, we fit a different network for every accident period considered. Clearly, we should ask for a more economic modeling, and hopefully different accident periods can share parts of the machine learning structure. However, it is not immediately clear how this can be achieved because of the fact that the algorithm is of a recursive nature that needs a careful bias control.
\item Our proposal considers a one-shot prediction of the ultimate claim. Naturally the same technology can also be used for a one-shot prediction of the entire future claim payment process. In machine learning jargon, this will require a sequence-to-sequence forecasting approach.
\item We only consider closed and reported but not settled (RBNS) claims. We still need to take care of incurred but not reported  (IBNR) claims. This will require a downstream model likely being based on a frequency-severity decomposition. For the frequency forecasting the same one-shot prediction approach could work, however, rather in a sequence-to-sequence forecasting structure because we care about the specific reporting pattern.
\end{itemize}

\bigskip

{\small 
\renewcommand{\baselinestretch}{.51}
}

\newpage

\appendix

\section{Proof}

{\Beweis
  {\bf Proof of Proposition \ref{CL prop}.}
  It suffices to prove that $\widehat{F}_{j}=\prod_{l=j}^{J-1}\widehat{f}_{l}$ for all $j \in \{0,\ldots, J-1\}$. The proof goes by induction. {\it Initialization.} For $J-1$, the claim follows from \eqref{CL2 A}.

{\it Induction step.} Now we consider the step $j+1 \to j \in \{0,\ldots, J-2\}$. Assume
$\widehat{F}_{k}=\prod_{l=k}^{J-1}\widehat{f}_{l}$ holds for all $k\in \{j+1, \ldots, J-1\}$.
We have
\begin{eqnarray*}
\widehat{F}_{j}&=& \frac{\sum_{i=1}^{I-j-1}\widehat{C}^*_{i,J}}
{\sum_{i=1}^{I-j-1}C_{i,j}}
~=~\frac{\sum_{i=1}^{I-j-1}C_{i,j+1}}
{\sum_{i=1}^{I-j-1}C_{i,j}}
\frac{\sum_{i=1}^{I-j-1}\widehat{C}^*_{i,J}}
{\sum_{i=1}^{I-j-1}C_{i,j+1}}
\\&=&
\widehat{f}_{j}~
\frac{\sum_{i=1}^{I-j-2}\widehat{C}^*_{i,J}+\widehat{C}^*_{I-j-1,J}}
      {\sum_{i=1}^{I-j-2}C_{i,j+1} + C_{I-j-1,j+1}}
  \\&=&
        \widehat{f}_{j}\left[
\frac{\sum_{i=1}^{I-j-2}\widehat{C}^*_{i,J}}
        {\sum_{i=1}^{I-j-2}C_{i,j+1} + C_{I-j-1,j+1}}
        +\frac{\widehat{C}^*_{I-j-1,J}}
        {\sum_{i=1}^{I-j-2}C_{i,j+1} + C_{I-j-1,j+1}}\right]
\\&=&
       \widehat{f}_{j}\left[
       \frac{\sum_{i=1}^{I-j-2}C_{i,j+1}}
        {\sum_{i=1}^{I-j-2}C_{i,j+1} + C_{I-j-1,j+1}}
       \frac{\sum_{i=1}^{I-j-2}\widehat{C}^*_{i,J}}
        {\sum_{i=1}^{I-j-2}C_{i,j+1} }       + 
        \frac{C_{I-j-1,j+1}}
             {\sum_{i=1}^{I-j-2}C_{i,j+1} + C_{I-j-1,j+1}}
      \frac{\widehat{C}^*_{I-j-1,J}}{ C_{I-j-1,j+1}}\right]
 \\&=&
       \widehat{f}_{j}\left[
       \frac{\sum_{i=1}^{I-j-2}C_{i,j+1}}
        {\sum_{i=1}^{I-j-2}C_{i,j+1} + C_{I-j-1,j+1}}\,\widehat{F}_{j+1}
       + 
        \frac{C_{I-j-1,j+1}}
             {\sum_{i=1}^{I-j-2}C_{i,j+1} + C_{I-j-1,j+1}}\,\widehat{F}_{j+1}\right]      
\\&=&
\widehat{f}_{j}\,
\widehat{F}_{j+1}
~=~\prod_{l=j}^{J-1}\widehat{f}_{l}.
\end{eqnarray*}
This completes the proof.
\EndProof}


\begin{thebibliography}{999}


 \bibitem{AvanziRL}
 Avanzi, B., Richman, R., Wong, B., W\"uthrich, M.V., Xie, Y. (2026).
 Reinforcement learning for micro-level claims reserving.
{\it arXiv}:2601.07637. 


\bibitem{Bladt}
  Bladt, M., Pittarello, G. (2025).
  Individual claims reserving using the Aalen–Johansen estimator.
  {\it ASTIN Bulletin - The Journal of the IAA} {\bf 55/1}, 29-49.


\bibitem{BF}
Bornhuetter, R.L., Ferguson, R.E. (1972). The actuary and IBNR.
{\it Proceedings CAS} {\bf 59}, 181-195.


\bibitem{Chaoubi}
Chaoubi, I., Besse, C., Cossette, H., C\^{o}t\'e, M.-P. (2023).
Micro-level reserving for general insurance claims using a long short-term memory network.
{\it Applied Stochastic Models in Business and Industry} {\bf 39/3}, 382-407.


\bibitem{DeFeliceMoriconi}
De Felice, M., Moriconi, F. (2019).
Claim watching and individual claims reserving using classification and regression trees.
{\it Risks} {\bf 7/4}, 102.


\bibitem{DelongLindholmW}
Delong, {\L}., Lindholm, M., W\"uthrich, M.V. (2022). 
Collective reserving using individual claims data.
{\it Scandinavian Actuarial Journal} {\bf 2022/1}, 1-28.


\bibitem{GabrielliEAJ}
Gabrielli, A. (2021).
An individual claims reserving model for reported claims.
{\it European Actuarial Journal} {\bf 11/2}, 541-577.

\bibitem{Hiabu}
  Hiabu, M., Hofman, E.D., Pitarello, G. (2023).
  A machine learning approach based on survival analysis for IBNR frequencies in non-life reserving.
  {\it arXiv}:2312.14549.
  
\bibitem{Kuo1}
Kuo, K. (2019).
DeepTriangle: a deep learning approach to loss reserving.
{\it Risks} {\bf 7/3}, article 97.

\bibitem{Kuo2}
Kuo, K. (2020).
Individual claims forecasting with Bayesian mixture density networks.
{\it arXiv}:2003.02453v1.

  
\bibitem{Lopez2}
Lopez, O., Milhaud, X. (2021).
Individual reserving and nonparametric estimation of claim amounts subject to large reporting delays. 
{\it Scandinavian Actuarial Journal} {\bf 2021/1}, 34-53.

\bibitem{Lopez}
Lopez, O., Milhaud, X., Th\'{e}rond, P.-E.  (2019). 
A tree-based algorithm adapted to microlevel reserving and long development claims.
{\it ASTIN Bulletin - The Journal of the IAA} {\bf 49/3}, 741-762.

\bibitem{LorenzSchmidt}
Lorenz, H., Schmidt, K.D. (2016).
Grossing up method.
In: {\it Handbook on Loss Reserving},
Radtke, M., Schmidt, K.D., Schnaus, A. (eds.), Springer, 127-131.


\bibitem{Mack}
Mack, T. (1993). Distribution-free calculation of the standard error of chain ladder reserve estimates. 
{\it ASTIN Bulletin - The Journal of the IAA} {\bf 23/2}, 213-225.

\bibitem{RichmanCred}
Richman, R., Scognamiglio, S.,  W{\"u}thrich, M.V. (2025).
The credibility transformer.
{\it European Actuarial Journal} {\bf  15/2}, 345-379.

\bibitem{RichmanW2}
Richman, R., W\"uthrich, M.V. (2020).
Nagging predictors.
{\it Risks} {\bf 8/3}, article 83.


\bibitem{Rosenlund}
Rosenlund, S. (2012). 
Bootstrapping individual claim histories. 
{\it ASTIN Bulletin - The Journal of the IAA} {\bf 42/1}, 291-324. 


\bibitem{Schwab}
Schneider, J.C., Schwab, B. (2025).
Advancing loss reserving: a hybrid neural network approach for individual claim development prediction.
{\it Journal of Risk and Insurance} {\bf 92/2}, 389-423.

\bibitem{Schnieper}
Schnieper, R. (1991).
Separating true IBNR from IBNER claims.
{\it ASTIN Bulletin - The Journal of the IAA} {\bf 21/1}, 111-127.


\bibitem{Turcotte}
Turcotte, R., Shi, P. (2026).
Individual loss reserving for multi-coverage insurance.
{\it ASTIN Bulletin - The Journal of the IAA} {\bf 56/2}, to appear.



\bibitem{WTree}
W\"uthrich, M.V. (2018).
Machine learning in individual claims reserving.
{\it Scandinavian Actuarial Journal} {\bf 2018/6}, 465-480. 

\bibitem{Wbalance}
W\"uthrich, M.V.  (2020).
Bias regularization in neural network models for general insurance pricing.
{\it European Actuarial Journal} {\bf 10/1}, 179-202.


\bibitem{WM2008}
W\"uthrich, M.V., Merz, M. (2008).
{\it Stochastic Claims Reserving Methods in Insurance}. Wiley.



\bibitem{WM2023}
W\"uthrich, M.V., Merz, M. (2023).
{\it Statistical Foundations of Actuarial Learning and its Applications.}
Springer.


\end{thebibliography}
\end{document}